\newtheorem{lemma}{Lemma}
\newtheorem{theorem}[lemma]{Theorem}
\newtheorem{corollary}[lemma]{Corollary}
\DeclareMathOperator{\tr}{tr}
\newcommand{\cS}{\mathcal{S}}
\newcommand{\cP}{\mathcal{P}}
\newcommand{\cB}{\mathcal{B}}
\newcommand{\eps}{\varepsilon}
\DeclareMathOperator{\argmin}{argmin}
\begin{document}

\title{A minimax approach to one-shot entropy inequalities}

\author{Anurag Anshu}
\affiliation{Institute for Quantum Computing, University of Waterloo, Waterloo, Canada}
\affiliation{Perimeter Institute for Theoretical Physics, Waterloo, Canada}

\author{Mario Berta}
\affiliation{Department of Computing, Imperial College London, England}

\author{Rahul Jain}
\affiliation{Center for Quantum Technologies, National University of Singapore and MajuLab, UMI 3654, Singapore}

\author{Marco Tomamichel}
\affiliation{Centre for Quantum Software and Information, University of Technology Sydney, Sydney}
\affiliation{Center for Quantum Technologies, National University of Singapore, Singapore}

\begin{abstract}
One-shot information theory entertains a plethora of entropic quantities, such as the smooth max-divergence, hypothesis testing divergence and information spectrum divergence, that characterize various operational tasks and are used to prove the asymptotic behavior of various tasks in quantum information theory. Tight inequalities between these quantities are thus of immediate interest. In this note we use a minimax approach (appearing previously for example in the proofs of the quantum substate theorem), to simplify the quantum problem to a commutative one, which allows us to derive such inequalities. Our derivations are conceptually different from previous arguments and in some cases lead to tighter relations. We hope that the approach discussed here can  lead to progress in open problems in quantum Shannon theory, and exemplify this by applying it to a simple case of the joint smoothing problem. 
\end{abstract}

\maketitle

%%%%%%%%%%%%%%%%%%%%%%%%%%%%%%%%%%%%%%%%%%%%

\section{Introduction}
\label{sec:intro}

Recent years have seen remarkable progress in the area of one-shot quantum Shannon theory, which generalizes the standard asymptotic and i.i.d.\ (independent and identically distributed) quantum Shannon theory and also eases the notational complications in the latter. Achievability results in the one-shot setting clarify a lot about the structure of the protocol, as various entropic equalities that are equivalent in the asymptotic and i.i.d. setting are vastly different in the one shot setting. This setting also forces the development of novel encoding and decoding schemes that would have been trivial if the time sharing method was used in the asymptotic and i.i.d. setting.  

A (minor) downside of one-shot information theory is that there can be various quantities that seem to generalize the entropic quantities such as the relative entropy. Below, we introduce various such quantities that will be considered in this work. We focus here on relative entropies, but relations for other entropic quantities like entropy, conditional entropy and mutual information can often be derived readily using the fact that they can be expressed in terms of relative entropies. 

\subsection{Notation and definitions}

We will fix a finite-dimensional Hilbert space throughout most of this manuscript and denote with $\cP$ and $\cS$ the set of positive semi-definite operators and the subset of trace-normalized quantum states, respectively. Sometimes we will refer to the set of sub-normalized states, denoted $\cS_{\bullet}$, which contains all positive semi-definite operators $\rho \geq 0$ (using the L\"owner partial order) with $0 < \tr(\rho) \leq 1$. When joint quantum systems are considered, we use the notation $\cS(AB)$ etc.\ to denote joint quantum states on the Hilbert spaces $A$ and $B$.

Some of the entropic quantities will require the concept of a neighbourhood, namely a function $\cB$ that maps $\rho \in \cS$ to an $\eps$-neighbourhood $\cB^{\eps}(\rho) \subset \cS$ of $\rho$. We can also define neighbourhoods of sub-normalized states in the same way. We will always require that, for any $\rho \in \cS_{\bullet}$, the set $\cB^{\eps}(\rho)$ is convex and at least contains $\rho$. Such $\eps$-neighbourhoods can easily be constructed from any metric on states, and the two most prominent examples are defined below for any $\eps \in [0, 1)$. The first is the neighbourhood of states that are close in trace distance, $T(\rho,\sigma) := \frac12 \| \rho - \sigma \|$, given as
\begin{align}
   \cB^{\eps}_T(\rho) := \left\{ \tilde\rho \in \cS :  T(\rho, \tilde\rho) \leq \eps \right\} \,.
\end{align}
The second is the neighbourhood of sub-normalized states that are close in purified distance~\cite{tomamichel09},\begin{align}
   \cB^{\eps}_P(\rho) := \left\{ \tilde\rho \in \cS_{\bullet} : P(\rho,\tilde\rho) \leq \eps \right\} \,,
\end{align}
where $P(\rho,\sigma) = \sqrt{1 - \bar{F}(\rho, \sigma)}$ and
$\bar{F}(\rho,\sigma) = \big(\|\sqrt{\rho}\sqrt{\sigma}\|_1 + \sqrt{(1-\tr \rho)(1-\tr\sigma)} \big)^2$
is a generalization of the fidelity to sub-normalized states.

We are now ready to define our entropic quantities of interest. The max-divergence is defined for any $\rho \in \cS_{\bullet}$ and $\sigma \in \cP$ as
\begin{align}
   D_{\max}(\rho\|\sigma) := \inf \{ \lambda \in \mathbb{R} : \rho \leq 2^{\lambda} \sigma \} \,.
\end{align}
Note that by definition of the infimum this quantity takes on the value $+\infty$ in case there does not exist a $\lambda$ satisfying the constraint $\rho \leq 2^{\lambda} \sigma$, which happens if and only if the support of $\rho$ is not contained in the support of $\sigma$. Otherwise, the minimum is achieved and takes the value $\lambda^* = \log \| \sigma^{-\frac12} \rho \sigma^{-\frac12} \|_\infty$, where we used the Moore-Penrose inverse. Using any neighbourhood ball $\cB^{\eps}$, we define an $\eps$-smooth max-divergence as~\cite{renner05,datta08}
\begin{align}
   D_{\max}^{\eps}(\rho\|\sigma) := \inf_{\tilde\rho \in \cB^{\eps}(\rho)} D_{\max}(\tilde\rho\|\sigma) \,.
\end{align}
We will use the notation $D_{\max}^{\eps,P}$ and $D_{\max}^{\eps,T}$ to specify the balls $\cB_P^{\eps}$ and $\cB_T^{\eps}$, respectively.

The max-divergence is a limiting case of a R\'enyi divergence~\cite{lennert13,wilde13}, namely the family
\begin{align}
	\widetilde{D}_{\alpha}(\rho\|\sigma) :=\frac{1}{\alpha-1} \log \frac{ \tr \left( \sigma^{\frac{1-\alpha}{2\alpha}} \rho \sigma^{\frac{1-\alpha}{2\alpha}}\right)^{\alpha} } { \tr \rho } \,.
\end{align}
for $\alpha \in [\frac12, 1) \cup (1, \infty)$ defined for any $\rho \in \cS_{\bullet}$ and $\sigma \in \cP$. The max-relative divergence is recovered in the limit $\alpha \to \infty$ and the name is justified since the family is monotonically increasing as a function of $\alpha$. In the limit $\alpha \to 1$, we recover the relative entropy:
$$D(\rho\|\sigma):=  \frac{1}{\tr \rho} \tr \rho\left(\log\rho - \log\sigma \right) .$$

Asymmetric quantum hypothesis testing plays a crucial role in one-shot quantum information theory. The fundamental relationship between errors of the first and second kind can be cast as an entropic quantity.
Bounding the error of the first kind with $\eps \in [0, 1)$ and minimizing the error of the second kind, the $\eps$-hypothesis testing divergence is defined as
\begin{align}
	D_h^{\eps}(\rho\|\sigma) := -\log \sup_{0 \leq \Lambda \leq 1 \atop \tr \Lambda \rho \geq 1-\eps }  \tr \Lambda\sigma \,.
\end{align}

For any Hermitian operator $X$, let $\{X\}_+$ be the projector onto the subspace spanned by all the eigenvectors with positive eigenvalue. We define the $\eps$-information spectrum divergence as
\begin{align}
	D_s^{\eps}(\rho\|\sigma):= \sup\{ \lambda \in \mathbb{R} : \tr \rho\{2^\lambda\sigma - \rho\}_{+} \leq \eps\} \,.
\end{align}
This quantity gives a potential quantum generalization of the notion of $\eps$-tail bounds of the log-likelihood ratio function. To see this, note that for $P, Q$ two probability distributions, the above expression simplifies to
\begin{align}
	D_s^{\eps}(P \| Q):= \sup \left\{ \lambda \in \mathbb{R} : \Pr_P \bigg[ \log \frac{P}{Q} \leq \lambda \bigg] \leq \eps \right\} \,.
\end{align}
Its usefulness, apart from this simple interpretation, is mainly due to its close relation to hypothesis testing, shown in the following relation from~\cite[Lemma 12]{tomamichel12}: For any $\rho \in \cS$, $\sigma \in \cP$ and  $\eps, \delta\in(0,1)$ with $\eps+\delta<1$, it holds that
\begin{equation}
\label{eq:dsdh}
D_s^{\eps}(\rho\|\sigma) \leq D_h^{\eps}(\rho\|\sigma) \leq D_s^{\eps+\delta}(\rho\|\sigma) + \log\frac{1}{\delta} \,.
\end{equation}

\subsection{Some useful properties of above quantities}

The purified distance satisfies the following `gentle measurement' property, which has first been established in~\cite[Lemma 7]{tomamichel17b}. Since the relation between the below lemma and the result in~\cite{tomamichel17b} is not imediately obvious, we provide a proof in Appendix~\ref{app:gentle} for the convenience of the reader.
\begin{lemma}
For any projector $P$ and $\rho \in \cS_{\bullet}$, we have
\label{lem:purifiedgentle}
\begin{align}
	P(\rho, \tilde\rho) = \sqrt{\tr P\rho}  \qquad \textrm{for} \qquad \tilde\rho = \frac{(1-P)\rho(1-P)}{1 - \tr P \rho}  \,.
\end{align}
\end{lemma}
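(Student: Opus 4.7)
The plan is to compute the generalized fidelity $\bar F(\rho,\tilde\rho)$ directly and show that $\bar F(\rho,\tilde\rho) = 1 - \tr P\rho$, from which the conclusion $P(\rho,\tilde\rho) = \sqrt{1 - \bar F(\rho,\tilde\rho)} = \sqrt{\tr P\rho}$ follows immediately from the definition of the purified distance. Throughout, I abbreviate $Q := 1-P$ and $p := \tr P\rho$, so that by construction $\tilde\rho = Q\rho Q/(1-p)$.

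First I would read off the trace. A direct calculation using $\tr(Q\rho Q) = \tr(Q\rho) = \tr\rho - p$ gives $\tr\tilde\rho = (\tr\rho - p)/(1-p)$, and hence $1 - \tr\tilde\rho = (1-\tr\rho)/(1-p)$. This takes care of the sub-normalization correction appearing in $\bar F$, yielding
$$\sqrt{(1-\tr\rho)(1-\tr\tilde\rho)} \;=\; \frac{1-\tr\rho}{\sqrt{1-p}}\,.$$

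Next I would evaluate the fidelity term $\|\sqrt{\rho}\sqrt{\tilde\rho}\|_1 = \tr\sqrt{\sqrt{\tilde\rho}\rho\sqrt{\tilde\rho}}$. Setting $Y := \sqrt{Q\rho Q}$ so that $\sqrt{\tilde\rho} = Y/\sqrt{1-p}$, this amounts to computing $\tr\sqrt{Y\rho Y}/\sqrt{1-p}$. The one non-routine step — and the crux of the whole argument — is the observation that $Y$ is supported on the range of $Q$, so $Y = QYQ$, and therefore
$$Y\rho Y \;=\; (QYQ)\,\rho\,(QYQ) \;=\; Y\,(Q\rho Q)\,Y \;=\; Y\cdot Y^{2}\cdot Y \;=\; Y^{4}\,.$$
Consequently $\sqrt{Y\rho Y} = Y^2 = Q\rho Q$, and so $\|\sqrt{\rho}\sqrt{\tilde\rho}\|_1 = \tr(Q\rho Q)/\sqrt{1-p} = (\tr\rho - p)/\sqrt{1-p}$.

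Finally I add the two contributions: $\|\sqrt{\rho}\sqrt{\tilde\rho}\|_1 + \sqrt{(1-\tr\rho)(1-\tr\tilde\rho)} = (\tr\rho - p + 1 - \tr\rho)/\sqrt{1-p} = \sqrt{1-p}$. Squaring gives $\bar F(\rho,\tilde\rho) = 1 - p$, and the lemma follows. The main obstacle — and really the only place where something beyond bookkeeping is used — is the identity $Y\rho Y = Y^4$; everything else is a routine two-line calculation, with the sub-normalized nature of $\rho$ handled cleanly by the $1-\tr\rho$ correction in $\bar F$.
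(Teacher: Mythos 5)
Your proof is correct and follows the same route as the paper's: both compute $\bar F(\rho,\tilde\rho)$ directly from its definition and show the two terms sum to $\sqrt{1-\tr P\rho}$. The only difference is that you spell out the justification of $\|\sqrt{\rho}\sqrt{\tilde\rho}\|_1 = \tr\bigl((1-P)\rho(1-P)\bigr)/\sqrt{1-\tr P\rho}$ via the identity $Y\rho Y = Y^4$, which the paper leaves as ``a simple computation.''
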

It is worth noting that the state $\tilde\rho$ is only normalized if $\rho \in \cS$ and sub-normalized otherwise. The special case of normalized $\rho$ is in fact well-known, and in that case we also have $T(\rho, \tilde\rho) \leq P(\rho,\tilde\rho) = \sqrt{\tr P\rho}$ by the Fuchs-van de Graaf inequality.

Many of these entropic quantities satisfy the data processing \cite{datta08,beigi13,frank13}. That is, for any quantum channel (a completely positive and trace-preserving map) $\mathcal{E}$, it holds that
\begin{equation}
\label{eq:dataproc}
D_h^{\eps}(\rho\|\sigma) \geq D_h^{\eps}(\mathcal{E}(\rho)\|\mathcal{E}(\sigma)), \quad \widetilde{D}_{\alpha}(\rho\|\sigma) \geq \widetilde{D}_{\alpha}(\mathcal{E}(\rho)\|\mathcal{E}(\sigma)), \quad D_{\max}^{\eps}(\rho\|\sigma) \geq D_{\max}^{\eps}(\mathcal{E}(\rho)\|\mathcal{E}(\sigma)).
\end{equation}
Data processing for the information spectrum divergence is not as simple, but an approximate data-processing inequality can be deduced from~\eqref{eq:dsdh}. Thus, information spectrum divergence is known to satisfy data processing only up to an additive logarithmic term.

%As mentioned earlier, there are two potential definitions of the information spectrum divergence. Here we argue that they are in fact equivalent. It was shown in \cite{hayashi03} that $\mathrm{Tr}[\rho\{\rho-\lambda\sigma\}_-]$ is a monotonically increasing function of $\lambda$. Taking into account the potential discontinuities in this function, we conclude that for any $\eps \in (0,1)$, the largest $\lambda$ such that $\mathrm{Tr}[\rho\{\rho-\lambda\sigma\}_-]\leq \eps$ is arbitrarily close to the smallest $\lambda'$ such that $\mathrm{Tr}[\rho\{\rho-\lambda\sigma\}_-]>\eps$. This implies that
%\begin{equation}
%\label{DsDtildes}
%D_s^{\eps}(\rho\|\sigma) =\tilde{D}_s^{\eps}(\rho\|\sigma).
%\end{equation}
%We will continue to use different notations for both, as it keeps the arguments appearing below transparent.

%%%%%%%%%%%%%%%%%%%%%%%%%%%%%%%%%%%%%%%%%%%%

\section{Relating various information theoretic measures}

Our central idea is inspired by the works \cite{JainRS02, Jain:2009, JainN12} on the quantum substate theorem, which show that we can use a minimax approach to find the optimal smoothing of the max-divergence. More precisely, we use the following straight-forward generalization of a key result from~\cite{JainN12}, a proof of which is given in Appendix~\ref{app:minimax} for the convenience of the reader.

\begin{lemma}\label{lem:minimax-smoothing}
Let $\rho \in \cS_{\bullet}$, $\sigma \in \cP$. For any convex $\eps$-neighbourhood $\cB^{\eps}(\rho)$, we have
\begin{align}
D_{\max}^{\eps}(\rho\|\sigma)=\sup_{M\geq0\atop \mathrm{Tr}[M\sigma]\leq1}\inf_{\tilde{\rho}\in \cB^{\eps}(\rho)}\log\mathrm{Tr}\left[M\tilde{\rho}\right].
\end{align}
\end{lemma}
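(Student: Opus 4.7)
The plan is to rewrite the max-divergence itself as a supremum over a set of ``test'' operators, turning the definition of $D_{\max}^{\eps}$ into a nested $\inf\sup$, and then invoke a minimax theorem to interchange the two optimizations.

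First I would establish the SDP-style dual representation
\begin{align}
 D_{\max}(\tilde\rho\|\sigma) = \sup_{M\geq 0,\; \tr[M\sigma]\leq 1} \log \tr[M\tilde\rho].
\end{align}
The ``$\geq$'' direction is immediate: if $\tilde\rho\leq 2^{\lambda}\sigma$ then $\tr[M\tilde\rho]\leq 2^{\lambda}\tr[M\sigma]\leq 2^{\lambda}$ for any feasible $M$. For ``$\leq$'', if $\mathrm{supp}(\tilde\rho)\not\subset\mathrm{supp}(\sigma)$ both sides are $+\infty$ (choose $M$ to be a large multiple of a projector supported in $\ker\sigma\cap\mathrm{supp}(\tilde\rho)$); otherwise, take $M=\sigma^{-1/2}|v\rangle\langle v|\sigma^{-1/2}$ where $|v\rangle$ is a leading eigenvector of $\sigma^{-1/2}\tilde\rho\sigma^{-1/2}$, which is feasible and achieves $2^{D_{\max}(\tilde\rho\|\sigma)}$.

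Substituting into the definition of $D_{\max}^{\eps}$ and using monotonicity of $\log$, the lemma reduces to the minimax identity
\begin{align}
 \inf_{\tilde\rho\in\cB^{\eps}(\rho)} \sup_{M\geq 0,\;\tr[M\sigma]\leq 1} \tr[M\tilde\rho] \;=\; \sup_{M\geq 0,\;\tr[M\sigma]\leq 1} \inf_{\tilde\rho\in\cB^{\eps}(\rho)} \tr[M\tilde\rho].
\end{align}
The function $(\tilde\rho,M)\mapsto \tr[M\tilde\rho]$ is bilinear, hence convex-concave and jointly continuous. The set $\cB^{\eps}(\rho)$ is convex by assumption and compact as a closed subset of the compact set of sub-normalized states. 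The set of feasible $M$ is convex. Sion's minimax theorem then yields the claimed exchange, completing the proof.

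The main obstacle is that the feasible set of $M$ need not be compact when $\sigma$ is rank-deficient, since $M$ can diverge on $\ker\sigma$. I would handle this by a regularization argument: replace $\sigma$ by $\sigma_{\delta}:=\sigma+\delta I$, in which case $\tr[M\sigma_{\delta}]\leq 1$ forces $\|M\|_{\infty}\leq 1/\delta$ and Sion's theorem applies on a norm ball. Both sides are monotone in $\delta$ and converge to the corresponding quantities for $\sigma$ as $\delta\to 0^{+}$: on the left, $D_{\max}(\tilde\rho\|\sigma_{\delta})\to D_{\max}(\tilde\rho\|\sigma)$ pointwise and the infimum is attained on the compact set $\cB^{\eps}(\rho)$; on the right, any feasible $M$ for $\sigma$ can be approximated by feasible $M$'s for $\sigma_{\delta}$ by rescaling. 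Alternatively, one can truncate the $M$-domain to $\{0\leq M\leq c I,\; \tr[M\sigma]\leq 1\}$, apply Sion's theorem on this compact convex set, and send $c\to\infty$ using monotone convergence of the inner optimum in $c$.
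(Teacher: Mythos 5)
Your proof is correct and follows essentially the same route as the paper: first a dual representation of $D_{\max}$ as a supremum over operators $M\geq 0$ with $\tr[M\sigma]\leq 1$ (the paper's optimizer is the normalized projector onto the kernel of $\lambda^*\sigma-\rho$, yours the equivalent $\sigma^{-1/2}\ketbra{v}\sigma^{-1/2}$), followed by Sion's minimax theorem to exchange the two optimizations. Two small remarks: the regularization of $\sigma$ is unnecessary, since Sion's theorem only requires one of the two convex sets to be compact and $\cB^{\eps}(\rho)$ already serves that role; and in the case $\mathrm{supp}(\tilde\rho)\not\subset\mathrm{supp}(\sigma)$ the subspace $\ker\sigma\cap\mathrm{supp}(\tilde\rho)$ may be trivial, so you should instead take $M$ proportional to the projector onto a vector in $\ker\sigma$ that is not orthogonal to $\mathrm{supp}(\tilde\rho)$ (such a vector always exists when the support condition fails, and this is the choice the paper makes).
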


\subsection{Smooth max-divergence and R{\'e}nyi relative entropies}

Our first application is a relation between $\eps$-smoth max-divergence and R\'enyi divergence, which improves on~\cite[Proposition 6.5]{mythesis} for the purified distance smoothing (which was shown using a different method) and is new for normalized trace distance smoothing. Our proof closely follows the proof of the quantum substate theorem in~\cite{JainN12}.

\begin{theorem}\label{thm:renyi}
Let $\rho \in \cS_{\bullet}$, $\sigma \in \cP$. For any $\eps \in (0,1)$ and $\alpha>1$, we have
\begin{align}
D_{\max}^{\eps,P}(\rho\|\sigma) &\leq \widetilde{D}_\alpha(\rho\|\sigma) + \frac{ 1}{\alpha-1} \log \frac{1}{\eps^2}  + \log \frac{1}{1-\eps^2} \,.
\end{align}
The same inequality also holds with $D_{\max}^{\eps,P}$ replaced by $D_{\max}^{\eps,T}$ with $\rho \in \cS$.
\end{theorem}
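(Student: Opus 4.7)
My plan is to follow the minimax-based strategy afforded by Lemma~\ref{lem:minimax-smoothing}. Applying that lemma rewrites $D_{\max}^{\eps,P}(\rho\|\sigma) = \sup_{M} \inf_{\tilde\rho} \log\tr[M\tilde\rho]$, where the supremum ranges over $M \geq 0$ with $\tr[M\sigma] \leq 1$ and the infimum over $\tilde\rho \in \cB_P^\eps(\rho)$; it therefore suffices, for each feasible $M$, to exhibit one $\tilde\rho$ in $\cB_P^\eps(\rho)$ for which $\log\tr[M\tilde\rho]$ is bounded by the claimed right-hand side.

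For a fixed such $M$, I would take $\tilde\rho$ to be a gentle-measurement truncation of $\rho$: pick a threshold $t>0$, let $\Pi_t := \{M - tI\}_+$ be the projector onto the eigenvectors of $M$ with eigenvalue exceeding $t$, and set $\tilde\rho := (I-\Pi_t)\rho(I-\Pi_t)/(1 - \tr[\Pi_t\rho])$. Lemma~\ref{lem:purifiedgentle} then ensures that $\tilde\rho$ is normalized and $P(\rho,\tilde\rho) = \sqrt{\tr[\Pi_t\rho]}$, so $\tilde\rho \in \cB_P^\eps(\rho)$ as soon as $\tr[\Pi_t\rho] \leq \eps^2$. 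Moreover, since $\Pi_t$ commutes with $M$, one has $(I-\Pi_t)M(I-\Pi_t) \leq t(I-\Pi_t) \leq tI$, and hence $\tr[M\tilde\rho] \leq t/(1-\eps^2)$, which will produce the additive $\log\frac{1}{1-\eps^2}$ term in the final bound.

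The crux of the argument is to choose $t$ as small as possible subject to $\tr[\Pi_t\rho] \leq \eps^2$. I would prove a R\'enyi--Markov estimate of the form $\tr[\Pi_t\rho] \leq 2^{(\alpha-1)\widetilde{D}_\alpha(\rho\|\sigma)}\,\tr[\rho]\,t^{-(\alpha-1)}$, which legitimizes the choice $t = 2^{\widetilde{D}_\alpha(\rho\|\sigma)} \eps^{-2/(\alpha-1)}$ and thereby delivers the $\frac{1}{\alpha-1}\log\frac{1}{\eps^2}$ contribution. Such an estimate should follow by combining the operator Markov inequality $\Pi_t \leq (M/t)^r$ for an optimized $r$ with a H\"older inequality in Schatten norms and the Araki--Lieb--Thirring inequality, invoking the identity $\tr[(\sigma^{(1-\alpha)/(2\alpha)}\rho\sigma^{(1-\alpha)/(2\alpha)})^\alpha] = 2^{(\alpha-1)\widetilde{D}_\alpha(\rho\|\sigma)}\tr[\rho]$. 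I expect this step to be the main technical hurdle: a naive H\"older argument using the Schatten-dual exponents of $\widetilde{D}_\alpha$ only yields the weaker factor $\frac{\alpha}{\alpha-1}$, and extracting the tight $\frac{1}{\alpha-1}$ requires carefully exploiting the projector structure of $\Pi_t$, in close analogy with the Jain--Nayak analysis of the quantum substate theorem.

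For the trace-distance variant with normalized $\rho \in \cS$, the same $\tilde\rho$ is itself normalized, and the Fuchs--van de Graaf inequality $T(\rho,\tilde\rho) \leq P(\rho,\tilde\rho)$ then places it in $\cB_T^\eps(\rho)$ under the very same condition $\tr[\Pi_t\rho] \leq \eps^2$. Since the bound on $\log\tr[M\tilde\rho]$ derived above depends on $\tilde\rho$ only through $\Pi_t$, the trace-distance inequality follows with no additional work.
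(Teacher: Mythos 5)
Your opening and closing steps---the minimax reduction via Lemma~\ref{lem:minimax-smoothing}, the gentle-measurement construction via Lemma~\ref{lem:purifiedgentle}, and the Fuchs--van de Graaf argument for the trace-distance variant---all match the paper's proof. The gap sits exactly where you flag the ``main technical hurdle'': the R\'enyi--Markov estimate $\tr[\Pi_t\rho]\leq 2^{(\alpha-1)\widetilde{D}_\alpha(\rho\|\sigma)}\tr[\rho]\,t^{-(\alpha-1)}$ is false, and no amount of care with the projector structure will rescue it, because thresholding on the spectrum of $M$ is the wrong cut (it ignores where $\rho$ puts its mass). A commutative counterexample: take $\rho=\diag(1/2,1/2)$, $\sigma=\diag(1/m,s)$ with $s$ large, and $M=\diag(m,0)$, so that $\tr[M\sigma]=1$ and $2^{(\alpha-1)\widetilde{D}_\alpha(\rho\|\sigma)}\tr\rho\approx 2^{-\alpha}m^{\alpha-1}$. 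For any $t$ with $m/2<t<m$ your estimate would give $1/2=\tr[\Pi_t\rho]\leq 2^{-\alpha}(m/t)^{\alpha-1}<2^{-\alpha}\cdot 2^{\alpha-1}=1/2$, a contradiction. Concretely, with $\alpha=2$ and $\eps=0.6$ your prescribed threshold is $t=2^{\widetilde{D}_2}\eps^{-2}\approx 0.69\,m<m$, yet $\tr[\Pi_t\rho]=1/2>\eps^2=0.36$, so your $\tilde\rho$ does not lie in $\cB_P^{\eps}(\rho)$. What \emph{is} provable by your route (take $r=(\alpha-1)/\alpha$ in $\Pi_t\leq(M/t)^r$, then H\"older with exponents $\alpha$ and $\alpha/(\alpha-1)$ plus Araki--Lieb--Thirring) is $\tr[\Pi_t\rho]\leq t^{-(\alpha-1)/\alpha}\bigl(2^{(\alpha-1)\widetilde{D}_\alpha(\rho\|\sigma)}\tr\rho\bigr)^{1/\alpha}$, which only yields the weaker coefficient $\frac{\alpha}{\alpha-1}$ on $\log\frac{1}{\eps^2}$ that you already anticipated.

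The fix, and the paper's actual argument (following Jain--Nayak), is to cut on the likelihood ratio of the \emph{measured} distributions rather than on the eigenvalues of $M$. In the eigenbasis $\{|v_i\rangle\}_i$ of $M$ set $p_i=\langle v_i|\rho|v_i\rangle$ and $q_i=\langle v_i|\sigma|v_i\rangle$, and let $\Pi$ project onto the indices with $p_i/q_i>\tau$ for $\tau=2^{\widetilde{D}_\alpha(\rho\|\sigma)}\eps^{-2/(\alpha-1)}$. The classical Markov bound $\sum_{i\in I}p_i\leq\tau^{-(\alpha-1)}\sum_i p_i^{\alpha}q_i^{1-\alpha}\leq\eps^2$ (using data processing of $\widetilde{D}_\alpha$ under the measurement) controls the smoothing, while on the complement $p_i\leq\tau q_i$ gives $\sum_{i\notin I}m_i p_i\leq\tau\,\tr[M\sigma]\leq\tau$; the constraint $\tr[M\sigma]\leq 1$ enters only at this last step, not as a bound on the eigenvalues of $M$. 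This likelihood-ratio cut is what produces the tight $\frac{1}{\alpha-1}\log\frac{1}{\eps^2}$, and it is the one substantive idea your proposal is missing.
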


\begin{proof}
Invoking Lemma~\ref{lem:minimax-smoothing} the claim becomes equivalent to
\begin{align}
\sup_{M\geq0\atop \mathrm{Tr}[M\sigma]\leq1}\inf_{\tilde{\rho}\in \cB(\rho)} \tr \left[M\tilde{\rho}\right]\leq2^{D_\alpha(\rho\|\sigma)}\cdot g(\eps)^{\frac{1}{\alpha-1}} h(\eps) ,
\end{align}
where we introduced $g(\eps) = \frac{1}{\eps^2}$ and $h(\eps) = \frac{1}{1-\eps^2}$ for convenience.
That is, for every $M$ with $\tr(M \sigma) \leq 1$ it is sufficient to produce a corresponding $\tilde{\rho}\in\mathcal{B}(\rho)$ that fulfils the bound. For such an $M$ with spectral decomposition $M = \sum_im_i|v_i\rangle\langle v_i|$, and $\alpha > 1$, define
\begin{align}
p_i:=\langle v_i|\rho|v_i\rangle,\quad q_i:=\langle v_i|\sigma|v_i\rangle,\quad \textrm{and} \quad
I:=\left\{i:\;\frac{p_i}{q_i}>2^{D_\alpha(\rho\|\sigma)}\cdot g(\eps)^{\frac{1}{\alpha-1}}\right\} 
\end{align}
and finally $\Pi:=\sum_{i\in I}|v_i\rangle\langle v_i|$.
We now invoke the data-processing inequality for the quantum R\'enyi divergences under the projective measurement $\{|v_i\rangle\!\langle v_i|\}_i$, leading to
\begin{align}
2^{(\alpha-1)\cdot D_\alpha(\rho\|\sigma)}\geq\sum_ip_i^\alpha q_i^{1-\alpha}\geq\sum_{i\in I}p_i\left(\frac{p_i}{q_i}\right)^{\alpha-1}\geq\sum_{i\in I}p_i\left(2^{D_\alpha(\rho\|\sigma)}\cdot g(\eps)^{\frac{1}{\alpha-1}}\right)^{\alpha-1},
\end{align}
where the last inequality follows from the definition of $I$. This implies that
\begin{align}
\label{eq:projbound}
\tr \Pi \rho = \sum_{i\in I} p_i \leq g(\eps)^{-1} = \eps^2 \,.
\end{align}

We are now ready to define our smoothed state, 
\begin{align}
 	\tilde{\rho} := \frac{(1-\Pi)\rho(1-\Pi)}{1 - \tr\Pi\rho} ,
\end{align}
which is normalized if and only if $\rho$ is normalized (and otherwise sub-normalized). By Lemma~\ref{lem:purifiedgentle} we find that $P(\rho, \tilde\rho) = \sqrt{\tr P\rho} \leq \eps$, and thus this state lies in both $\cB_P^{\eps}$ and $\cB_T^{\eps}$. Furthermore,
\begin{align}
(1 - \tr\Pi\rho) \tr M \tilde \rho &= 
%\tr M(1-\Pi)\rho(1-\Pi) = 
\sum_{i\not\in I} p_i \cdot m_i
\leq \sum_{i\not\in I}q_i\cdot m_i \cdot 2^{D_\alpha(\rho\|\sigma)}\cdot g(\eps)^{\frac{1}{\alpha-1}} 
\leq 2^{D_\alpha(\rho\|\sigma)}\cdot g(\eps)^{\frac{1}{\alpha-1}} , \label{eq:mrhotbound}
\end{align}
where the penultimate inequality follows from the definition of $I$ and the last inequality follows from
$\sum_{i} q_i\cdot m_i = \tr M\sigma \leq1$. Finally, we bound $\frac{1}{1-\tr \Pi \rho} \leq \frac{1}{1-\eps^2} = h(\eps)$, concluding the proof.
\end{proof}

%Choosing $\alpha=1+\frac{1}{\sqrt{n}}$ as in Section 6.2 of \cite{mythesis}, immediately leads to statements like
%\begin{align}
%D_{1+\frac{1}{\sqrt{n}}}(\rho\|\sigma)\geq\frac{1}{n}D_{\max}^{\eps}\left(\rho^{\otimes n}\middle\|\sigma^{\otimes n}\right)-\frac{\sqrt{n}\cdot\log\frac{2}{\eps^2}}{n}.
%\end{align}
%Taking the limit of $n\rightarrow \infty$ leads to 
%\begin{align}
%D(\rho\|\sigma)= \lim_{n\rightarrow \infty}D_{1+\frac{1}{\sqrt{n}}}(\rho\|\sigma)\geq \lim_{n\rightarrow\infty}\frac{1}{n}D_{\max}^{\eps}\left(\rho^{\otimes n}\middle\|\sigma^{\otimes n}\right),
%\end{align}
%which corresponds to a version of the quantum asymptotic equipartition property. 

%%%%%%%%%%%%%%%%%%%%%%%%%%%%%%%%%%%%%%%%%%%%
\subsection{Relating smooth max-divergence and asymmetric hypothesis testing}

One of the main results in~\cite{tomamichel12} was to establish a close relation between the smooth max-divergence and asymmetric hypothesis testing, which were then used to derive asymptotic bounds. The following relation improves on two bounds established in~\cite[Proposition 13]{tomamichel12} and~\cite[Proposition 4.1]{dupuis12}. 

\begin{theorem}
\label{thm:DmaxDh}
Let $\rho \in \cS$, $\sigma \in \cP$ and $\eps \in (0,1)$ and $\delta \in (0,1-\eps^2)$. It holds that
\begin{align}
 	D_h^{1-\eps}(\rho\|\sigma)  \geq D_{\max}^{\sqrt{\eps},P}(\rho\|\sigma) - \log\frac{1}{1-\eps}  \geq D_h^{1-\eps-\delta}(\rho\|\sigma) - \log\frac{4}{\delta^2}  \,.
\end{align}
%\item $$.
\end{theorem}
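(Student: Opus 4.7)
The theorem comprises two inequalities, and I would prove each separately using Lemma~\ref{lem:minimax-smoothing}. For the first, $D_h^{1-\eps}(\rho\|\sigma) \geq D_{\max}^{\sqrt\eps,P}(\rho\|\sigma) - \log\frac{1}{1-\eps}$, the plan mirrors the proof of Theorem~\ref{thm:renyi}. Writing $\mu := 2^{D_h^{1-\eps}(\rho\|\sigma)}$, by the minimax lemma it suffices, for each $M\geq 0$ with $\tr M\sigma \leq 1$, to exhibit a state $\tilde\rho \in \cB_P^{\sqrt\eps}(\rho)$ with $\tr M\tilde\rho \leq \mu/(1-\eps)$. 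I would take $\Pi$ to be the spectral projector of $M$ onto eigenvalues strictly greater than $\mu/(1-\eps)$ and set $\tilde\rho = (1-\Pi)\rho(1-\Pi)/(1-\tr\Pi\rho)$. The crucial step is verifying $\tr\Pi\rho \leq \eps$: otherwise $\Pi$ would be a feasible test for $D_h^{1-\eps}$, forcing $\tr\Pi\sigma \geq 1/\mu$; but Markov applied to $\tr M\sigma\leq 1$ on the eigenspace of $\Pi$ gives $\tr\Pi\sigma \leq (1-\eps)/\mu$, a contradiction. Lemma~\ref{lem:purifiedgentle} then controls $P(\rho,\tilde\rho)\leq\sqrt\eps$, while the spectral cutoff delivers the bound on $\tr M\tilde\rho$.

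For the second inequality, which I rewrite as $D_h^{1-\eps-\delta}(\rho\|\sigma) \leq D_{\max}^{\sqrt\eps,P}(\rho\|\sigma) + \log\frac{4(1-\eps)}{\delta^2}$, I would use the minimax lemma in the reverse direction. Fix any test $\Lambda$ feasible for $D_h^{1-\eps-\delta}$, so that $p := \tr\Lambda\rho \geq \eps+\delta$, and set $M := \Lambda/\tr\Lambda\sigma$, giving $\tr M\sigma = 1$. Lemma~\ref{lem:minimax-smoothing} then yields
\begin{align}
2^{D_{\max}^{\sqrt\eps,P}(\rho\|\sigma)} \geq \inf_{\tilde\rho\in\cB_P^{\sqrt\eps}(\rho)} \tr M\tilde\rho = \frac{1}{\tr\Lambda\sigma}\inf_{\tilde\rho}\tr\Lambda\tilde\rho,
\end{align}
and the claim reduces to the uniform lower bound $\inf_{\tilde\rho\in\cB_P^{\sqrt\eps}(\rho)}\tr\Lambda\tilde\rho \geq \delta^2/(4(1-\eps))$ for every such $\Lambda$.

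To establish this, I would apply the data-processing inequality for the generalized fidelity $\bar F$ to the binary POVM $\{\Lambda,I-\Lambda\}$. Writing $q := \tr\Lambda\tilde\rho$ and $s := \tr\tilde\rho$, Cauchy--Schwarz together with $\bar F(\rho,\tilde\rho)\geq 1-\eps$ first forces $s \in [1-\eps,1]$, and then data processing yields $\sqrt{pq}+\sqrt{(1-p)(s-q)} \geq \sqrt{1-\eps}$. Parametrising $q = s\sin^2\phi$ and optimising, the minimum of $q$ over feasible $(s,\phi)$ is achieved at $s=1$ and equals $(\sqrt{p(1-\eps)}-\sqrt{(1-p)\eps})^2$. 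A short algebraic check, using $p\geq\eps+\delta$ and $\eps+\delta\leq 1$, confirms that this quantity is at least $\delta^2/(4(1-\eps))$; in the remaining range $\eps+\delta>1$ permitted by the hypothesis $\delta<1-\eps^2$, no feasible $\Lambda$ exists and the inequality is vacuous.

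The principal obstacle is obtaining this sharp constant in the second inequality. Coarser tools, such as bounding $T(\rho,\tilde\rho)\leq\sqrt\eps$ and applying it to $\Lambda$, or even the purification estimate $|\sqrt{\tr\Lambda\rho}-\sqrt{\tr\Lambda\tilde\rho}|\leq\sqrt{2(1-F(\rho,\tilde\rho))}$, yield bounds that degrade badly when $\sqrt\eps$ is comparable to or exceeds $\eps+\delta$. Only the full binary-outcome data-processing inequality for $\bar F$, with the sub-normalisation parameter $s$ carefully tracked, recovers the factor $\delta^2/(4(1-\eps))$, matching the advertised $\log\frac{4}{\delta^2}$ after absorbing the $-\log\frac{1}{1-\eps}$ offset.
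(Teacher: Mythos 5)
Your proof is correct, and the first half takes a genuinely different route from the paper's. For $D_h^{1-\eps}(\rho\|\sigma)\geq D_{\max}^{\sqrt\eps,P}(\rho\|\sigma)-\log\frac{1}{1-\eps}$ the paper also combines Lemma~\ref{lem:minimax-smoothing} with the gentle-measurement state $\tilde\rho=(1-\Pi)\rho(1-\Pi)/(1-\tr\Pi\rho)$ and Lemma~\ref{lem:purifiedgentle}, but it chooses $\Pi$ by measuring $\rho$ and $\sigma$ in the eigenbasis of $M$ and cutting on the likelihood ratio $P(i)/Q(i)$ at level $2^{D_s^{1-\eps}(P\|Q)+\eta}$, invoking $D_h\geq D_s$ from~\eqref{eq:dsdh} and letting $\eta\to0$. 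You instead cut directly on the eigenvalues of $M$ at level $2^{D_h^{1-\eps}}/(1-\eps)$ and control $\tr\Pi\rho$ by the feasibility/Markov contradiction ($\tr\Pi\sigma\geq 2^{-D_h^{1-\eps}}$ if $\Pi$ were feasible, versus $\tr\Pi\sigma\leq(1-\eps)2^{-D_h^{1-\eps}}$ from $\tr M\sigma\leq1$); this is self-contained, avoids the detour through the information spectrum divergence and the limit in $\eta$, and lands on the same bound $\tr M\tilde\rho\leq 2^{D_h^{1-\eps}}/(1-\eps)$. What the paper's version buys is the explicit classical reduction via $D_s$, which it reuses essentially verbatim in the joint-smoothing theorem. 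For the second inequality your argument is the paper's in different clothing: the paper picks the optimal smoother $\tilde\rho\leq2^{\lambda}\sigma$ and the optimal test and applies the binary-POVM data-processing inequality for $\bar F$ once, obtaining $\sqrt{\tr Q\tilde\rho}\geq\sqrt{1-\eps}-\sqrt{1-\eps-\delta}\geq\frac{\delta}{2\sqrt{1-\eps}}$, whereas you route the same fidelity estimate back through Lemma~\ref{lem:minimax-smoothing} with $M=\Lambda/\tr\Lambda\sigma$ and prove the uniform lower bound $\inf_{\tilde\rho}\tr\Lambda\tilde\rho\geq\delta^2/(4(1-\eps))$; your closed form $(\sqrt{p(1-\eps)}-\sqrt{(1-p)\eps})^2$ evaluated at $p=\eps+\delta$ gives exactly the paper's estimate. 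The only point worth making explicit is the degenerate case $\tr\Lambda\sigma=0$, where $M$ is undefined: there one should take $X=c\Lambda$ with $c\to\infty$ in the minimax lemma to conclude $D_{\max}^{\sqrt\eps,P}(\rho\|\sigma)=\infty$, so the inequality is vacuous.
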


We note in particular that our new upper bound on $D_{\max}^{\eps,P}(\rho\|\sigma)$ does not depend on the number of distinct eigenvalues of $\sigma$, in contrast to the result in~\cite[Proposition 13]{tomamichel12}. It is also tight in $\eps$, unlike the bound in~\cite[Proposition 4.1]{dupuis12}. This is particularly relevant when attempting to generalize these relations to the infinite-dimensional case.

\begin{proof}
We start with the first inequality. Using Lemma \ref{lem:minimax-smoothing}, we fix an arbitrary $M \geq 0$ such that $\mathrm{Tr}[M\sigma] \leq 1$ and it suffices to construct a state $\tilde\rho \in \cB_P^{\eps}$ such that 
\begin{align}
  \tr M\tilde\rho \leq \frac{1}{\eps'} 2^{D_h^{\eps'}(\rho\|\sigma)} \,,
\end{align} where we set $\eps' = 1 - \eps$ for convenience.
Given the spectral decomposition $M = \sum_i m_i |v_i\rangle\!\langle v_i|$, we define $\mathcal{M}$ as the measurement in the basis $\{ |v_i\rangle \}_i$ and two probability distributions $P := \mathcal{M}(\rho)$ and $Q:= \mathcal{M}(\sigma)$ obtained by measuring $\rho$ and $\sigma$ in this basis. The data-processing inequality for the hypothesis testing divergence and~\eqref{eq:dsdh} yield
\begin{align}
  D_h^{\eps'}(\rho\|\sigma) \geq D_h^{\eps'}(P\|Q) \geq D_s^{\eps'}(P\|Q) =: K \,.
\end{align}
Let us now, for any $\eta > 0$, define the set $I := \{i: P(i) \leq 2^{K+\eta} Q(i)\}$ such that $P(I) > \eps'$ by definition of $D_s^{\eps'}(P\|Q)$. Moreover, let $\Pi:= \sum_{i \not\in I} |v_i\rangle\!\langle v_i|$. We have 
\begin{align}
 \tr \Pi\rho = \tr \mathcal{M}(\Pi)\rho = \tr \Pi \mathcal{M}(\rho)  =  1 - P(I) \leq \eps \,. \label{eq:Ibound}
\end{align}
And, thus, according to Lemma~\ref{lem:purifiedgentle}, we have $P(\rho, \tilde\rho) \leq \sqrt{\eps}$ for the choice
$\tilde\rho := \frac{(1-\Pi)\rho(1-\Pi)}{1 - \tr\Pi\rho}$.
Finally, using that $1 - \tr \Pi\rho > \eps'$ by~\eqref{eq:Ibound}, we find
\begin{align}
\tr M\tilde{\rho} \leq \frac{1}{\eps'}\sum_{i \in I} m_i P(i) \leq \frac{2^{K+\eta}}{\eps'}\cdot \sum_{i\in I} m_i Q(i) = \frac{2^{K+\eta}}{\eps'} \mathrm{Tr}[M\sigma] \leq \frac{2^{D_h^{\eps'}(\rho\|\sigma)+\eta}}{\eps'}.
\end{align}
The first inequality then follows in the limit $\eta \to 0$.

To show the second inequality, we follow the ideas in~\cite{tomamichel12}. Let $\tilde\rho \in \cB_P^{\eps}$ be such that \begin{align}
\tilde \rho \leq 2^{\lambda}\sigma  \qquad \textrm{with} \qquad \lambda = D_{\max}^{\sqrt\eps,P}(\rho\|\sigma) \,,
\end{align}
that is, the state $\tilde\rho$ is an optimal smooth state. Moreover, consider the optimal hypothesis test $0 \leq Q \leq 1$ satisfying
$\tr (1-Q)\rho = 1- \eps - \delta$ and $\log \tr Q\sigma = -D_h^{1-\eps-\delta}(\rho\|\sigma)$.
Then, the data-processing inequality for the fidelity and applied to the positive operator-valued measurement $\{ Q, 1 - Q\}$ yields the following sequence of inequalities:
\begin{align}
  \sqrt{1-\eps} = \sqrt{\bar{F}(\rho,\tilde\rho)} &\leq \sqrt{\tr Q\rho \tr Q\tilde\rho} + \sqrt{\tr (1-Q)\rho \tr (1-Q)\tilde\rho} \\
  &\leq \sqrt{\tr Q\tilde\rho} + \sqrt{\tr (1-Q)\rho} \\
  &\leq \sqrt{2^{\lambda} \tr Q \sigma} + \sqrt{1 - \eps -\delta} \,.
\end{align}
Substituting for $\lambda$ and $\tr Q\sigma$, we thus arrive at the inequality
\begin{align}
\log\left( \sqrt{1-\eps}-\sqrt{1-\eps-\delta} \right)^2 \leq D_{\max}^{\sqrt\eps,P}(\rho\|\sigma) - D_h^{1-\eps-\delta}(\rho\|\sigma) \,.
\end{align}
Further bounding $\sqrt{1-\eps}-\sqrt{1-\eps-\delta} \geq \frac{\delta}{2 \sqrt{1-\eps}}$ yields the desired result.
\end{proof}

%An immediate corollary is the following, which is obtained by combining Lemma \ref{DmaxDsrel} with Equation \ref{DHDsrel}.
%\begin{corollary}
%\label{DmaxDhrel}
%Let $\rho, \sigma$ be quantum states and $\eps, \delta \in (0,1)$ with $\eps+3\delta<1$. It holds that
%$$ D_{max}^{\sqrt{1-\eps}}(\rho\|\sigma) -\log\frac{1}{\eps\delta}\leq D_H^{\eps+\delta}(\rho\|\sigma) \leq D_{max}^{\sqrt{1-\eps-3\delta}}(\rho\|\sigma) + 3\log\frac{1}{\delta}.$$
%\end{corollary}

%%%%%%%%%%%%%%%%%%%%%%%%%%%%%%%%%%%%%%%%%%%%%%%%%
\section{Joint smoothing relative to arbitrary states}

Simultanenous smoothing is a question of great interest in quantum Shannon theory, with recent progress such as in \cite{Sen18,drescher13} having new consequences in network scenarios. Here we show simultaneous smoothing for the two marginals of joint quantum system $AB$. In contrast to earlier results on joint smoothing, our technique allows to smooth relative to an arbitrary positive operator, and this operator can in fact be different for the two marginals. If we choose these operators to be identity, our result reduces to the usual case considered in the literature~\cite{drescher13}.
We hope that the approach can lead to more progress on the simultaneous smoothing question.

\begin{theorem}
Let $\rho_{AB} \in \cS(AB)$ with marginals $\rho_A$ and $\rho_B$, and let $\sigma_A \in \cP(A)$, $\sigma_B \in \cP(B)$. For any $\eps, \eps' \in (0,1)$ such that $\eps + \eps' < 1$, there exists a state $\tilde\rho_{AB} \in \cS(AB)$ with $P(\rho_{AB}, \tilde\rho_{AB}) \leq \sqrt{\eps +\eps'}$ such that its marginals $\tilde\rho_A$ and $\tilde\rho_B$ satisfy
\begin{align}
D_{\max}(\tilde\rho_A \| \sigma_A) \leq D_{h}^{1-\eps}(\rho_A \| \sigma_A) + \Delta \label{eq:smoothd}
 \quad \textrm{and} \quad
 D_{\max}(\tilde\rho_B \| \sigma_B) \leq D_{h}^{1-\eps'}(\rho_B \| \sigma_B) + \Delta 
\end{align}
for $\Delta = -\log (1-\eps-\eps')$.
\end{theorem}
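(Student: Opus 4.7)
The plan is to extend the minimax/projector strategy of Theorem~\ref{thm:DmaxDh}. By SDP duality (the unsmoothed special case of Lemma~\ref{lem:minimax-smoothing}), the two claimed max-divergence bounds are equivalent to requiring $\tr[M_A\tilde\rho_A] \leq 2^{D_h^{1-\eps}(\rho_A\|\sigma_A)+\Delta}$ for every $M_A\geq 0$ with $\tr[M_A\sigma_A]\leq 1$, and symmetrically on $B$. I would first exhibit, for every fixed pair $(M_A,M_B)$, a candidate $\tilde\rho_{AB}\in \cB_P^{\sqrt{\eps+\eps'}}(\rho_{AB})$ that achieves both bounds for that specific pair, and then upgrade this to a single state valid for all pairs via a minimax argument.

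For the per-pair construction, I apply the proof of Theorem~\ref{thm:DmaxDh} separately on each subsystem to produce projectors $\Pi_A$ on $A$ and $\Pi_B$ on $B$ such that $\Pi_A$ is diagonal in the eigenbasis of $M_A$ (similarly for $B$), $\tr[\Pi_A\rho_A]\leq \eps$, $\tr[\Pi_B\rho_B]\leq\eps'$, and $\tr[M_A(I-\Pi_A)\rho_A]\leq 2^{D_h^{1-\eps}(\rho_A\|\sigma_A)}$ (with the analogous bound on $B$). Since $(I-\Pi_A)\otimes(I-\Pi_B)=:P$ is then a projector on $AB$, I set $\tilde\rho_{AB}:=P\rho_{AB}P/\tr[P\rho_{AB}]$. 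A union-bound type estimate gives $\tr[(I-P)\rho_{AB}]\leq \tr[\Pi_A\rho_A]+\tr[\Pi_B\rho_B]\leq \eps+\eps'$, and combined with Lemma~\ref{lem:purifiedgentle} this yields $P(\rho_{AB},\tilde\rho_{AB})\leq \sqrt{\eps+\eps'}$. The central calculation for the $A$-marginal is $\tr[(M_A\otimes I)P\rho_{AB}P]=\tr[(M_A(I-\Pi_A))\otimes(I-\Pi_B)\rho_{AB}]\leq \tr[M_A(I-\Pi_A)\rho_A]$, where the equality uses that $\Pi_A$ commutes with $M_A$, and the inequality uses the operator relation $0\leq I-\Pi_B\leq I$; dividing by $\tr[P\rho_{AB}]\geq 1-\eps-\eps'$ gives the claimed bound, and the argument for $B$ is symmetric.

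To pass from this per-pair statement to a single $\tilde\rho_{AB}$, I invoke $\max\{a,b\}=\sup_{t\in[0,1]}(ta+(1-t)b)$ to collapse both normalized marginal inequalities into a single linear inequality, ranging over the convex family $\mathcal{N}$ of operators of the form $t\cdot 2^{-D_h^{1-\eps}(\rho_A\|\sigma_A)-\Delta}\,M_A\otimes I+(1-t)\cdot 2^{-D_h^{1-\eps'}(\rho_B\|\sigma_B)-\Delta}\,I\otimes M_B$ with $t\in[0,1]$ and $M_A,M_B$ feasible. The theorem is then equivalent to $\inf_{\tilde\rho_{AB}\in \cB_P^{\sqrt{\eps+\eps'}}(\rho_{AB})}\sup_{N\in\mathcal{N}}\tr[N\tilde\rho_{AB}]\leq 1$; since the objective is bilinear and the smoothing ball is convex and compact, Sion's minimax exchanges the infimum and supremum, and the per-pair candidate from the previous step witnesses $\sup_{N\in\mathcal{N}}\inf_{\tilde\rho}\tr[N\tilde\rho_{AB}]\leq 1$, closing the argument.

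The main obstacle I anticipate is the minimax swap itself: once that is in place, the joint construction is a short adaptation of Theorem~\ref{thm:DmaxDh}'s proof, with the essential new observation being that the operator inequality $I-\Pi_B\leq I$ decouples the verification of the $A$-marginal bound from the projector chosen on $B$ (and vice versa), so that two independently designed single-system projectors combine cleanly into the tensor-product projector $P$.
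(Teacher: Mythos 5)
Your proposal is correct and follows essentially the same route as the paper: a Sion minimax reduction to a per-pair construction, tensor-product projectors built from single-system information-spectrum sets, a union bound on $\tr[(1-\Pi_A\otimes\Pi_B)\rho_{AB}]$, and Lemma~\ref{lem:purifiedgentle}. The only cosmetic differences are that you parametrize the dual test operators as normalized convex combinations ($\tr[M\sigma]\leq 1$, $t\in[0,1]$) where the paper uses the sum $\tr M_A(\tilde\rho_A-2^{\lambda_A}\sigma_A)+\tr M_B(\tilde\rho_B-2^{\lambda_B}\sigma_B)$ over $0\leq M_A,M_B\leq 1$, and that you omit the $\eta>0$ slack needed when invoking the supremum defining $D_s$ (the paper carries it through and takes $\eta\to 0$ at the end).
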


\begin{proof}
Let us first confirm that it suffices, for every $\eta > 0$, to construct a normalized state $\tilde\rho_{AB} \in \cB_P^{\sqrt{\delta}}(\rho_{AB})$ for $\delta = \eps+\eps'$ that satisfies the following operator inequalities:
\begin{align}
\tilde\rho_A \leq 2^{\lambda_A} \sigma_A \qquad  \textrm{and} \qquad 
\tilde\rho_B \leq 2^{\lambda_B} \sigma_B , \label{eq:suffcond}
\end{align}
where $\lambda_A = D_{h}^{1-\eps'}(\rho_A \| \sigma_A) + \Delta + \eta$ and $\lambda_B = D_{h}^{1-\eps''}(\rho_B \| \sigma_B) + \Delta + \eta$.
Then, the inequalities in~\eqref{eq:smoothd} are implied since $\eta > 0$ is arbitrarily small. Consider now
\begin{align}
\mathrm{Opt} &:= \inf_{\tilde\rho_{AB}\in \mathcal{B}^\eps(\rho_{AB})} \sup_{0 \leq M_A \leq 1 \atop  0 \leq M_B \leq 1}   \tr M_A(\tilde\rho_A - 2^{\lambda_A} \sigma_A) + \tr M_B(\tilde\rho_B - 2^{\lambda_B} \sigma_B)  \label{eq:minimax2}
\end{align}
where the infimum and supremum can be interchanged using Sion's minimax theorem~\cite{sion58}. Clearly $\mathrm{Opt} \leq 0$ implies the existence of a state satisfying the desiderate in~\eqref{eq:suffcond}. Using the minimax principle on~\eqref{eq:minimax2}, it thus suffices to construct, for every fixed $M_A$ and $M_B$, a $\tilde\rho_{AB} \in \cS(AB)$ with $P(\rho_{AB}, \tilde\rho_{AB}) \leq \sqrt{\delta}$ such that
  $\tr M_A \tilde\rho_A \leq 2^{\lambda_A} \tr M_A \sigma_A$ and $\tr M_B \tilde\rho_B \leq 2^{\lambda_B} \tr M_B \sigma_B$.

The proof now proceeds similarly to the proof of Theorem~\ref{thm:DmaxDh}, where more detail is given. Given the eigenvalue decomposition $M_A = \sum_i m_A(i) |v_i\rangle\!\langle v_i|_A$ of $M_A$, the measurement $\mathcal{M}_A$ in its eigenbasis, and the two probability distributions $P_A = \mathcal{M}_A(\rho_A)$ and $Q_A = \mathcal{M}_A(\sigma_A)$, we find
\begin{align}
  D_h^{1-\eps}(\rho_A\|\sigma_A) \geq D_h^{1-\eps}(P_A\|Q_A) \geq D_s^{1-\eps}(P_A\|Q_A) =: K_A  \,.
\end{align}
We then define the set $I_A = \{i: P(i) \leq 2^{K_A+\eta} Q(i)\}$ such that $P_A(I_A) > 1-\eps$. As a consequence, the projector $\Pi_A := \sum_{i \in I_A} |v_i\rangle\!\langle v_i|_A$ satisfies
\begin{align}
 \tr \Pi_A \rho_A = \tr \mathcal{M}_A(\Pi_A)\rho_A = \tr \Pi_A \mathcal{M_A}(\rho_A)  =  P_A(I_A) \geq 1- \eps. 
\end{align}
The exact same construction for $B$ yields $\Pi_B$ with $\tr \Pi_B \rho_B \geq 1 - \eps'$. Consequently, we establish
\begin{align}
  \tr (1_{AB} - \Pi_A \otimes \Pi_B) \rho_{AB} &= \tr (1_{AB} - \Pi_A \otimes 1_B)\rho_{AB} + \tr  (\Pi_A \otimes 1_B)(1_{AB} - 1_A \otimes \Pi_B) \rho_{AB}  \\
  &\leq  1 - \tr \Pi_A \rho_A + 1 - \tr \Pi_B \rho_B \leq \eps + \eps'  = \delta \,,  \label{eq:Ibound2}
\end{align}
where we used the fact that $\tr_A (P_A \otimes 1_B) X_{AB} \leq X_B$ for every projector $P_A$ and positive operator $X_{AB}$ with marginal $X_B$ (see, e.g.,~\cite[Lemma A.1]{mythesis} for a proof of a more general statement).

Now, we are ready to define the (normalized) smoothed state
\begin{align}
	\tilde\rho_{AB} = \frac{(\Pi_A \otimes \Pi_B) \rho_{AB} (\Pi_A \otimes \Pi_B)}{\tr (\Pi_A \otimes \Pi_B) \rho_{AB}}
\end{align}
such that Lemma~\ref{lem:purifiedgentle} together with~\eqref{eq:Ibound2} yields $P(\rho, \tilde\rho) \leq \sqrt{\delta}$. Moreover,
\begin{align}
  \tr M_A \tilde\rho_A &\leq \frac{1}{1-\delta} \tr (M_A \otimes 1_B) (\Pi_A \otimes \Pi_B) \rho_{AB} (\Pi_A \otimes \Pi_B) \\
  &\leq \frac{1}{1-\delta} \tr M_A \Pi_A \rho_A \Pi_A = \frac{1}{1-\delta} \sum_{i \in I_A} m_A(i) P_A(i) \\
  &\leq  \frac{2^{K_A+\eta}}{1-\delta}  \sum_{i \in I_A} m_A(i) Q_A(i) \,.
\end{align}
Finally, since $ \sum_{i \in I_A} m_A(i) Q_A(i) \leq \tr(M_A \sigma_A)$ and $\frac{2^{K_A+\eta}}{1-\delta} = 2^{\lambda_A}$, the first inequality in~\eqref{eq:suffcond} follows. The analogous argument for $B$ also verifies the second inequality in~\eqref{eq:suffcond}, concluding the proof.
\end{proof}

Using Theorem~\ref{thm:DmaxDh}, we can further replace $D_h^{1-\eps}$ with $D_{\max}^{\sqrt{\eps}}$ and $D_h^{1-\eps'}$ with $D_{\max}^{\sqrt{\eps'}}$ (introducing some small correction), which yields the following corollary.
\begin{corollary}
Let $\rho_{AB} \in \cS(AB)$ with marginals $\rho_A$ and $\rho_B$, and let $\sigma_A \in \cP(A)$, $\sigma_B \in \cP(B)$. For any $\eps, \eps', \delta \in (0,1)$ such that $\eps + \eps' + 2\delta < 1$, there exists a state $\tilde\rho_{AB} \in \cS(AB)$ with $P(\rho_{AB}, \tilde\rho_{AB}) \leq \sqrt{\eps +\eps'+ 2\delta}$ such that its marginals $\tilde\rho_A$ and $\tilde\rho_B$ satisfy
\begin{align}
D_{\max}(\tilde\rho_A \| \sigma_A) \leq D_{\max}^{\sqrt{\eps}}(\rho_A \| \sigma_A) + \Delta
 \quad \textrm{and} \quad
 D_{\max}(\tilde\rho_B \| \sigma_B) \leq D_{\max}^{\sqrt{\eps'}}(\rho_B \| \sigma_B) + \Delta 
\end{align}
for $\Delta = 2 - 2 \log \delta -\log (1-\eps-\eps'-2\delta)$.
\end{corollary}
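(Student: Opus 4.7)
The plan is to chain the preceding joint-smoothing Theorem with the second bound of Theorem~\ref{thm:DmaxDh}, following the hint given just before the statement. First I would apply the Theorem with the pair $(\eps+\delta, \eps'+\delta)$ substituted in place of $(\eps,\eps')$. The constraint $(\eps+\delta)+(\eps'+\delta) = \eps+\eps'+2\delta < 1$ is precisely the hypothesis of the Corollary, so the Theorem produces a state $\tilde\rho_{AB}$ with $P(\rho_{AB},\tilde\rho_{AB}) \leq \sqrt{\eps+\eps'+2\delta}$ whose marginals satisfy
\begin{equation*}
D_{\max}(\tilde\rho_A\|\sigma_A) \leq D_h^{1-\eps-\delta}(\rho_A\|\sigma_A) - \log(1-\eps-\eps'-2\delta),
\end{equation*}
together with the symmetric bound for $B$. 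Crucially, this $\tilde\rho_{AB}$ will serve as the witness for the Corollary, and the remaining task is purely to upper-bound the hypothesis-testing quantities on the right-hand side.

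Second, I would invoke Theorem~\ref{thm:DmaxDh} separately on $(\rho_A,\sigma_A)$ and $(\rho_B,\sigma_B)$, with its inner parameters chosen as $\eps$ and $\delta$ (resp.\ $\eps'$ and $\delta$). Before doing so I need to verify its hypothesis $\delta \in (0, 1-\eps^2)$, which follows from $\delta < (1-\eps-\eps')/2 < (1-\eps)/2 < (1-\eps)(1+\eps) = 1-\eps^2$, and analogously for $\eps'$. The second inequality of Theorem~\ref{thm:DmaxDh}, rearranged, then reads
\begin{equation*}
D_h^{1-\eps-\delta}(\rho_A\|\sigma_A) \;\leq\; D_{\max}^{\sqrt{\eps},P}(\rho_A\|\sigma_A) + \log\frac{4(1-\eps)}{\delta^2},
\end{equation*}
and symmetrically for $B$. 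Substituting these into the marginal bounds from the first step and using $\log(1-\eps) \leq 0$ together with $\log(4/\delta^2) = 2 - 2\log\delta$ delivers exactly
\begin{equation*}
D_{\max}(\tilde\rho_A\|\sigma_A) \;\leq\; D_{\max}^{\sqrt{\eps},P}(\rho_A\|\sigma_A) + \Delta
\end{equation*}
with $\Delta = 2 - 2\log\delta -\log(1-\eps-\eps'-2\delta)$, and the analogous statement for $B$, which is the claim.

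There is no serious analytic obstacle here; the entire argument is bookkeeping once the two previous results are in hand. The only place that requires care is the choice of inner parameters in the first step: one must shift both $\eps$ and $\eps'$ by the same $\delta$, so that the residue $1-(\eps+\delta)-\delta = 1-\eps-2\delta$ aligns the $D_h$ exponent $1-\eps-\delta$ with the combination $1-\tilde\eps-\tilde\delta$ produced by Theorem~\ref{thm:DmaxDh} with $(\tilde\eps,\tilde\delta)=(\eps,\delta)$, while simultaneously keeping the total smoothing radius at $\sqrt{\eps+\eps'+2\delta}$. The constant $2$ appearing in $\Delta$ is simply $\log 4$ from the $4/\delta^2$ in Theorem~\ref{thm:DmaxDh}, and the slack $-\log(1-\eps)$ is discarded in the final step to produce a symmetric-looking $\Delta$.
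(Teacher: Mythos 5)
Your proposal is correct and follows exactly the route the paper sketches in the sentence preceding the corollary: apply the joint-smoothing theorem with the shifted parameters $(\eps+\delta,\eps'+\delta)$ and then use the second inequality of Theorem~\ref{thm:DmaxDh} to replace $D_h^{1-\eps-\delta}$ and $D_h^{1-\eps'-\delta}$ by the smooth max-divergences. Your parameter bookkeeping (the smoothing radius $\sqrt{\eps+\eps'+2\delta}$, the verification of the hypotheses of Theorem~\ref{thm:DmaxDh}, and the assembly of $\Delta = 2 - 2\log\delta - \log(1-\eps-\eps'-2\delta)$ after discarding the non-positive term $\log(1-\eps)$) is accurate and supplies the details the paper omits.
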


\paragraph*{Acknowledgements.} 
The work was done when AA was affiliated to the Centre for Quantum Technologies, National University of Singapore. We thank David Sutter for help with the proof of Theorem~\ref{thm:renyi} for normalized trace distance.
AA and RJ were supported by the Singapore Ministry of Education and the National Research Foundation through the ``NRF2017-NRF-ANR004 VanQuTe'' grant. RJ is also supported by VAJRA Faculty Scheme of the Science and Engineering Board (SERB), Department of Science and Technology (DST), Government of India.

\appendix

\section{Proof of Lemma~\ref{lem:purifiedgentle}}
\label{app:gentle}

\begin{proof}[Proof of Lemma~\ref{lem:purifiedgentle}]
We need to verify that $\bar{F}(\rho,\tilde\rho) = 1 - \tr P\rho$ for $\tilde\rho = \frac{(1-P)\rho(1-P)}{1 - \tr P \rho}$. Indeed,
\begin{align}
\sqrt{\bar{F}(\rho,\tilde\rho)} &= \| \sqrt{\rho} \sqrt{\tilde\rho} \|_1 + \sqrt{(1-\tr \rho)(1-\tr \tilde\rho)} \\
 &= \frac{\tr (1 - P) \rho}{\sqrt{1 - \tr P \rho}} + \sqrt{ (1 - \tr\rho) \left(1 - \frac{\tr (1-P)\rho}{1 - \tr P\rho}\right)} %\\
 %&= \sqrt{\frac{1}{1 - \tr P \rho}}\left( \tr (1 - P) \rho + \sqrt{(1-\tr \rho)(1-\tr \rho) } \right) 
 = \sqrt{1 - \tr P\rho} 
\end{align}
by a simple computation.
\end{proof}

\section{Proof of Lemma~\ref{lem:minimax-smoothing}}
\label{app:minimax}

\begin{proof}[Proof of Lemma~\ref{lem:minimax-smoothing}]
Recall the definition of the max-divergence, 
%\begin{align}
$D_{\max}( \rho\|\sigma) = \log \inf_{\rho \leq \lambda \sigma} \lambda$.
%\end{align}
We first show the following identity:
\begin{align}
\inf_{ \rho \leq \lambda \sigma}  \lambda = \sup_{X \geq 0} \, \inf_{ \tr X \rho  \leq \lambda \tr X \sigma  }  \lambda \,.
\end{align}

The direction `$\geq$' follows directly from the fact that
\begin{align} 
 \inf_{ \rho \leq \lambda \sigma} \lambda \geq \inf_{ \tr X \rho  \leq \lambda \tr X \sigma  }  \lambda \,, \label{eq:ineq1}
\end{align}
for all $X \geq 0$, since the restriction on $\lambda$ on the right-hand side is less restrictive.

For the direction `$\leq$', we simply need to construct an operator $X \geq 0$ such that the infimum on the right-hand side of~\eqref{eq:ineq1} matches the left-hand side.
We first consider the case where $ \inf_{ \rho \leq \lambda \sigma} \lambda = \infty$, i.e.\ the case where the support of $\rho$ is not contained in the support of $\sigma$. In this case we can choose $X$ to be orthogonal to $\sigma$ but with $\tr X \rho > 0$, such that indeed also $\inf_{ \tr X \tilde \rho  \leq \lambda \tr X \sigma  }  \lambda = \infty$. Otherwise, choose $\lambda^* = \argmin_{ \rho \leq \lambda \sigma}  \lambda$. With $X$ the projector onto the kernel of $\lambda^* \sigma - \rho$, we find
\begin{align}
  \lambda \tr X\sigma - \tr X \rho = (\lambda - \lambda^*) \tr X\sigma +  \lambda^* \tr X\sigma - \tr X \rho =  (\lambda - \lambda^*) \tr X\sigma,
\end{align}
and thus $\inf_{\lambda \geq 0, \tr\, X \tilde \rho  \leq \lambda \tr\, X \sigma  } \lambda = \lambda^*$, as required.
Normalising $X$ such that $\tr X \sigma = 1$ then yields
\begin{align}
D_{\max}( \rho\|\sigma)  =  \log \sup_{X \geq 0 \atop \tr X \sigma = 1}  \tr\, X  \rho  = \log \sup_{X \geq 0 \atop \tr X \sigma \leq 1}  \tr\, X  \rho \, .
\end{align}

And finally, using the definition of $\eps$-smooth max-divergence, we find
\begin{align}
D^{\eps}_{\max}(\rho\|\sigma) = \inf_{\tilde \rho \in \cB^{\eps}(\rho)} \sup \limits_{X \geq 0 \atop \tr X \sigma \leq 1} \log \tr\, X \tilde \rho \, .
\end{align}
Sion's minimax theorem~\cite{sion58} ensures that we can swap the infimum and the supremum since $\cB^{\eps}(\rho)$ and $\{X\geq 0 ,  \tr X \sigma \leq 1 \}$ are convex sets, which completes the proof.
\end{proof}

%%%%%%%%%%%%%%%%%%%%%%%%%%%%%%%%%%%%%%%%%%%%

%\bibliographystyle{abbrv}
\bibliographystyle{arxiv_no_month}
\bibliography{library}

\begin{thebibliography}{10}

\bibitem{beigi13}
S.~Beigi.
\newblock {Sandwiched R\'{e}nyi Divergence Satisfies Data Processing
  Inequality}.
\newblock {\em J. Math. Phys.}, 54(12):122202, 2013.
\newblock
  \texttt{\href{http://dx.doi.org/10.1063/1.4838855}{DOI:\,10.1063/1.4838855}}.

\bibitem{datta08}
N.~Datta.
\newblock {Min- and Max- Relative Entropies and a New Entanglement Monotone}.
\newblock {\em IEEE Trans. on Inf. Theory}, 55(6):2816--2826, 2009.
\newblock
  \texttt{\href{http://dx.doi.org/10.1109/TIT.2009.2018325}{DOI:\,10.1109/TIT.2009.2018325}}.

\bibitem{drescher13}
L.~{Drescher} and O.~{Fawzi}.
\newblock On simultaneous min-entropy smoothing.
\newblock In {\em 2013 IEEE International Symposium on Information Theory},
  pages 161--165, 2013.
\newblock
  \texttt{\href{http://dx.doi.org/10.1109/ISIT.2013.6620208}{DOI:\,10.1109/ISIT.2013.6620208}}.

\bibitem{dupuis12}
F.~Dupuis, L.~Kraemer, P.~Faist, J.~M. Renes, and R.~Renner.
\newblock {Generalized Entropies}.
\newblock In {\em Proc. of the XVIIth Int. Congress on Math. Phys.}, pages
  134--153, Aalborg, Denmark, 2012.
\newblock
  \texttt{\href{http://dx.doi.org/10.1142/9789814449243\_0008}{DOI:\,10.1142/9789814449243\_0008}}.

\bibitem{frank13}
R.~L. Frank and E.~H. Lieb.
\newblock {Monotonicity of a Relative R\'{e}nyi Entropy}.
\newblock {\em J. Math. Phys.}, 54(12):122201, 2013.
\newblock
  \texttt{\href{http://dx.doi.org/10.1063/1.4838835}{DOI:\,10.1063/1.4838835}}.

\bibitem{JainN12}
R.~{Jain} and A.~{Nayak}.
\newblock Short proofs of the quantum substate theorem.
\newblock {\em IEEE Transactions on Information Theory}, 58(6):3664--3669,
  2012.
\newblock
  \texttt{\href{http://dx.doi.org/10.1109/TIT.2012.2184522}{DOI:\,10.1109/TIT.2012.2184522}}.

\bibitem{JainRS02}
R.~{Jain}, J.~{Radhakrishnan}, and P.~{Sen}.
\newblock Privacy and interaction in quantum communication complexity and a
  theorem about the relative entropy of quantum states.
\newblock In {\em The 43rd Annual IEEE Symposium on Foundations of Computer
  Science, 2002. Proceedings.}, pages 429--438, 2002.
\newblock
  \texttt{\href{http://dx.doi.org/10.1109/SFCS.2002.1181967}{DOI:\,10.1109/SFCS.2002.1181967}}.

\bibitem{Jain:2009}
R.~Jain, J.~Radhakrishnan, and P.~Sen.
\newblock A property of quantum relative entropy with an application to privacy
  in quantum communication.
\newblock {\em J. ACM}, 56(6):33:1--33:32, 2009.
\newblock
  \texttt{\href{http://dx.doi.org/10.1145/1568318.1568323}{DOI:\,10.1145/1568318.1568323}}.

\bibitem{lennert13}
M.~M\"{u}ller-Lennert, F.~Dupuis, O.~Szehr, S.~Fehr, and M.~Tomamichel.
\newblock {On Quantum R\'{e}nyi Entropies: A New Generalization and Some
  Properties}.
\newblock {\em J. Math. Phys.}, 54(12):122203, 2013.
\newblock
  \texttt{\href{http://dx.doi.org/10.1063/1.4838856}{DOI:\,10.1063/1.4838856}}.

\bibitem{renner05}
R.~Renner.
\newblock {\em {Security of Quantum Key Distribution}}.
\newblock PhD thesis, ETH Zurich, 2005.
\newblock
  \texttt{\href{http://arxiv.org/abs/quant-ph/0512258}{arXiv:\,quant-ph/0512258}}.

\bibitem{Sen18}
P.~Sen.
\newblock A one-shot quantum joint typicality lemma.
\newblock 2018.
\newblock \texttt{\href{http://arxiv.org/abs/1806.07278}{arXiv:\,1806.07278}}.

\bibitem{sion58}
M.~Sion.
\newblock {On General Minimax Theorems}.
\newblock {\em Pacific J. Math.}, 8:171--176, 1958.

\bibitem{mythesis}
M.~Tomamichel.
\newblock {\em {A Framework for Non-Asymptotic Quantum Information Theory}}.
\newblock PhD thesis, ETH Zurich, 2012.
\newblock \texttt{\href{http://arxiv.org/abs/1203.2142}{arXiv:\,1203.2142}}.

\bibitem{tomamichel09}
M.~Tomamichel, R.~Colbeck, and R.~Renner.
\newblock {Duality Between Smooth Min- and Max-Entropies}.
\newblock {\em IEEE Trans. on Inf. Theory}, 56(9):4674--4681, 2010.
\newblock
  \texttt{\href{http://dx.doi.org/10.1109/TIT.2010.2054130}{DOI:\,10.1109/TIT.2010.2054130}}.

\bibitem{tomamichel12}
M.~Tomamichel and M.~Hayashi.
\newblock {A Hierarchy of Information Quantities for Finite Block Length
  Analysis of Quantum Tasks}.
\newblock {\em IEEE Trans. on Inf. Theory}, 59(11):7693--7710, 2013.
\newblock
  \texttt{\href{http://dx.doi.org/10.1109/TIT.2013.2276628}{DOI:\,10.1109/TIT.2013.2276628}}.

\bibitem{tomamichel17b}
M.~Tomamichel and A.~Leverrier.
\newblock {A largely self-contained and complete security proof for quantum key
  distribution}.
\newblock {\em Quantum}, 1:14, 2017.
\newblock
  \texttt{\href{http://dx.doi.org/10.22331/q-2017-07-14-14}{DOI:\,10.22331/q-2017-07-14-14}}.

\bibitem{wilde13}
M.~M. Wilde, A.~Winter, and D.~Yang.
\newblock {Strong Converse for the Classical Capacity of Entanglement-Breaking
  and Hadamard Channels via a Sandwiched R\'{e}nyi Relative Entropy}.
\newblock {\em Comm. Math. Phys.}, 331(2):593--622, 2014.
\newblock
  \texttt{\href{http://dx.doi.org/10.1007/s00220-014-2122-x}{DOI:\,10.1007/s00220-014-2122-x}}.

\end{thebibliography}

\end{document}